\newtheorem{theorem}{Theorem}
\newtheorem{remark}[theorem]{Remark}
\title{On the quantum Guerra-Morato Action Functional}
\author{J. Knorst and A. O. Lopes}
\begin{document}
\maketitle

\abstract{Given a smooth potential $W:\mathrm{T}^{n} \to \mathbb{R}$ on the torus,
the Quantum Guerra-Morato action functional  is given by
\smallskip

$ \,\,\,\,\,\,\,\,\,\,\,\,\,\,\,\,\,\,\, \,\,\,\,\,\,\,\,\, I(\psi) = \int\,(\, \, \,\frac{D v\, D v^*}{2}(x) - W(x) \,) \,\,a(x)^2 dx,$
\smallskip

\noindent
where $\psi $ is described by $\psi = a\, e^{i\,\frac{ u }{h}} $, 
$ u =\, \frac{v + v^*}{2},$ $a=e^{\,\frac{v^*\,-\,v}{2\, \hbar} }$, $v,v ^*$ are real functions, $\int a^2 (x) d x =1$, and $D$ is derivative on $x \in \mathrm{T}^{n}$.  It is natural to consider the constraint $ 
\mathrm{d}\mathrm{i}\mathrm{v}(a^{2}Du)=0$, which means flux zero. The $a$ and $u$ obtained from a
 critical   solution (under variations $\tau$)  for such action functional, fulfilling  such constraints,  satisfy the Hamilton-Jacobi equation with a quantum potential. Denote $'=\frac{d}{d\tau}$. We show that the expression for the second variation of a critical solution is given by 

\smallskip
$\,\,\,\,\,\,\,\,\,\,\,\,\,\,\,\,\,\,\, \,\,\,\,\,\,\,\,\,\,\,\,\,\,\,\,\,\,\,\,\,\,\,\,\,\,\,\int a^{2}\,D[ v' ]\, D [(v ^*)']\, dx.$

\smallskip

Introducing the constraint   $\int a^2 \,D u \,dx =V$, we also consider later an associated dual eigenvalue problem. From this follows a transport and a kind of eikonal equation.
}
\section{Introduction}\label{Sec2.21}

In \cite{Ev} L. C.  Evans considered an Action Functional which is related to Aubry-Mather theory
and also present another  Action Functional which he called the Guerra-Morato Action Functional. As we will see this last one is more adequate for the description of the Physics of Quantum Mechanics, and to analyze its properties is the primary goal of the present work. In \cite{Ev} a few properties of this last mentioned functional were presented and we  intend to provide here a more complete description of the topic. One of our goals is to characterize critical action solutions for   this quantum action functional; they are not necessarily local minimal solutions. Our main result concerns the expression for the second variation of a critical solution (see Theorem \ref{poir1}). We will present full proofs adapting the reasoning followed in \cite{Ev} (with the necessary changes at each moment).

The Evans Functional Action (see (1.5) in \cite{Ev}) plays an important role in the quantum analog of weak KAM theory (a different topic not covered here).

The reason for the terminology Guerra-Morato Action Functional introduced in \cite{Ev} was 
motivated by the paper \cite{GM} by F. Guerra and L. Morato (see also  expression (14.3) in \cite{Nelson}, \cite{Voll} and \cite{Yu}). In fact, the expression coined by Evans, which is of a stationary nature,  does not appear in this form in \cite{GM} which is of nonstationary type (see the Lagrangian field expression (93) in \cite{GM}). Moreover, in \cite{GM} there is no assumption to constraints which is an essential issue in \cite{Ev} and here. 

The papers
\cite{Ana1}, 
\cite{Ana2}, 
\cite{Ana3}, \cite{LT},  \cite{GLM}, \cite{Zanelli}, \cite{Zanelli1}, \cite{G1} and \cite{Ber}   analyze different types of problems related to  Evans Functional Action and Mather measures.

  Let's consider a potential of class $C^\infty$ given by $W$ : $\mathrm{T}^{n}=(S^1)^n\rightarrow \mathbb{R}$, where
  $S^1$ is the unit circle. Under some suitable assumptions similar results hold for $W: \mathbb{R}^n \to \mathbb{R}$ but we will not address this issue here.

The classical  Lagrangian is given by
$$
L(v,\ x)=\frac{m}{2}|v|^{2}-W(x),\,\,\,x \in \mathrm{T}^{n},\, v\in \mathbb{R}^{n},
$$
and the associated classical  (mechanical) Hamiltonian is
$$
H(p,\ x)=\frac{1}{2\, m}|p|^{2}+W(x),\,\,\,x \in \mathrm{T}^{n} ,\,p\in \mathbb{R}^{n},
$$
where $m$ denotes the mass.
\noindent

Taking $p=m\, v$, it is well known  that the solutions of the Euler-Lagrange equation for $L$ and the ones for the Hamilton equation correspond to each other via a change of coordinates. It is well known that in Classical Mechanics there exist critical action principles for the Lagrangian  formulation and  for Hamiltonian one (see \cite{Arnold}). From the point of view of Physics, solutions minimizing the action are preferred in the theory.

We denote $ \hbar= \frac{1}{m}.$ A wave function has the form $\psi = a\, e^{i\,\frac{ u }{h}} $, where $a $ is positive and $u$ is real. $D$ will denote the derivative with respect to $x$.

\medskip

In Quantum Mechanics the expected value of the Hamiltonian is
\begin{equation}
\int (\frac{\hbar}{2}| D\, \psi|^2 + W |\psi|^2 )\, d x
\end{equation}

Given the functions $a>0$ and $u$, where $a:T^n \to \mathbb{R}$ and $u:T^n \to \mathbb{R}$,
  consider $s=\hbar\,\log a$ and take $v^* =  ( u+s)$
and $v = (u-s)$, then
  $$  \frac{v+v^*}{2} =u$$
  and
  $$ \frac{v^*- v}{2} =s =\hbar\,\log a.$$

  Thus, we can write every function of the form $a\, e^{i\,\frac{u}{\,\hbar} } $,
   in terms of $v$ and $v^*$ by  solving  the above equation.

   If $u$ and $a$ are periodic (\,in $(S^1)^n\,)$ the same goes for $v,v^*:T^n \to \mathbb{R}$.

   Likewise given $v$ and $v^*$, the functions $a>0$ and $u$ can be determined using the above expressions.

The Guerra-Morato action functional (see (7.10) in \cite{Ev}) is given by
\begin{equation} \label{pri} I(\psi) = \int\,(\, \, \,\frac{D v\, D v^*}{2} - W \,) \,\,a(x)^2 dx
\end{equation}
where $\psi $ is described by $\psi = a\, e^{i\,\frac{ u }{h}} $, 
$ u =\, \frac{v + v^*}{2},$ $a=e^{\,\frac{v^*\,-\,v}{2\, \hbar} }$, $v,v ^*$ are real functions and $\int a^2 (x) d x =1$. In some parts of our work we will take $\hbar=1$ to simplify the notation.

We will consider variations $\tau$ of certain given $\psi$ and we denote $'=\frac{d}{d\tau}$

When considering critical action for general variations and for such functional, we are analyzing a problem that is different from the  critical action problem  for the Evans Action Functional, as defined by  (1.5) in \cite{Ev} (see also  our Remark \ref{klj}). The expression \eqref{pri} is of Lagrangian type and the main point is that the critical solutions will determine solutions of the Hamilton-Jacobi equation where the potential has the right sign (this does not happen when considering the Evans Action Funcional; as in (2.8) in \cite{Ev}). This is the reason why the use of \eqref{pri} is more suitable from the physical point of view.

The function $\psi$ will be assumed to satisfy the constraints  \eqref{fra2}, \eqref{fra3}, \eqref{fra4}
to be defined next (see also \eqref{par4}, \eqref{par5} and \eqref{par6}). Expressions  \eqref{fra2}, \eqref{fra3}, \eqref{fra4} are in some sense the {\it field version} of the corresponding classical constraints in Aubry-Mather  theory (see \cite{Fathi}). The expression \eqref{fra3} corresponds to the assumption that a probability is holonomic  and condition  \eqref{fra4} plays the role of a homological class (see Section 2.6 in \cite{CI}).

Our main result is

\begin{theorem}  \label{poir1} If $\psi=a\,e^{i\,u/h}$ is critical for the action and $a>0$, where $ u =\, \frac{v + v^*}{2},$ $a=e^{\,\frac{v^*\,-\,v}{2\, \hbar} }$,
then the second derivative of the variation has the expression
\begin{equation} \label{rwr1}
\,\, \int a^{2}\,D[ v' ]\, D [(v ^*)']\, dx.
\end{equation}
\end{theorem}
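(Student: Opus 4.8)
The plan is to differentiate $I(\psi(\tau))$ twice in $\tau$, evaluate at a critical solution, and show that everything except $\int a^{2}\,D[v']\,D[(v^{*})']\,dx$ cancels, by the Euler--Lagrange relations valid at criticality together with the constraints differentiated to the appropriate order. I take $\hbar=1$, as the paper does, to keep the bookkeeping light.

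First I would record the derivatives of the ingredients: from $a^{2}=e^{\,v^{*}-v}$ one gets $(a^{2})'=a^{2}((v^{*})'-v')$ and $(a^{2})''=a^{2}((v^{*})''-v'')+a^{2}((v^{*})'-v')^{2}$; and for $\kappa:=\tfrac12 Dv\cdot Dv^{*}-W$ (with $W$ independent of $\tau$), $\kappa'=\tfrac12(D[v']\cdot Dv^{*}+Dv\cdot D[(v^{*})'])$ and $\kappa''=D[v']\cdot D[(v^{*})']+\tfrac12(D[v'']\cdot Dv^{*}+Dv\cdot D[(v^{*})''])$. Thus $I'=\int(\kappa'+\kappa((v^{*})'-v'))a^{2}\,dx$. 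Imposing the constraints with Lagrange multipliers — the constant $E$ for $\int a^{2}=1$ and a combined field $\mu$ absorbing the flux and homological multipliers — integration by parts turns $I'=0$ into the Hamilton--Jacobi relation in divergence form $(\kappa+\mu\cdot Du-E)a^{2}=\tfrac12\mathrm{div}(a^{2}(Dv+\mu))=-\tfrac12\mathrm{div}(a^{2}(Dv^{*}+\mu))$. Equating the two divergence expressions gives $\mathrm{div}(a^{2}(Du+\mu))=0$ (so with $\mathrm{div}(a^{2}Du)=0$ also $\mathrm{div}(a^{2}\mu)=0$), while averaging them gives the identity $(\kappa+\mu\cdot Du-E)a^{2}=-\tfrac12\mathrm{div}(a^{2}Ds)$, the workhorse for the second variation.

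Differentiating $I'$ again splits the outcome into three groups. The term $D[v']\cdot D[(v^{*})']$ from $\kappa''$ gives the target. The terms carrying $v''$ and $(v^{*})''$ reassemble into $I'$ in the acceleration direction $(v'',(v^{*})'')$, hence at criticality equal the multiplier pairings ($E$ for normalization, and the flux/homology pairings). The rest is the quadratic remainder $\int(2\kappa'((v^{*})'-v')+\kappa((v^{*})'-v')^{2})a^{2}\,dx$. I would combine these using the constraints differentiated twice in $\tau$ — for instance $\int a^{2}((v^{*})''-v'')\,dx=-\int a^{2}((v^{*})'-v')^{2}\,dx$ from $\tfrac{d^{2}}{d\tau^{2}}\int a^{2}=0$, and the analogous second-order identity for $\int a^{2}Du$ — so that the normalization pairing cancels the $E$-piece of the remainder. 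Using the divergence form of Hamilton--Jacobi (which replaces $\kappa a^{2}$ up to the $E$ and $\mu\cdot Du$ pieces matched against the acceleration pairings) and integrating by parts then converts the remainder into $\int a^{2}((v^{*})'-v')\,Ds\cdot D[(v^{*})'-v']\,dx$, annihilating the matching half of $2\kappa'((v^{*})'-v')$. After all these cancellations only a single current term is left, schematically $\int a^{2}((v^{*})'-v')\,(Du+\mu)\cdot D[v'+(v^{*})']\,dx$.

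The crux is this residual current term: it is not removed by Hamilton--Jacobi alone, and its vanishing is exactly where the homological constraint enters. The identity I must pin down is that at the critical solution the multiplier field is $\mu=-Du$, i.e. the effective current $a^{2}(Du+\mu)$ vanishes identically, whereupon the term is zero and the second variation reduces to $\int a^{2}\,D[v']\,D[(v^{*})']\,dx$. This $\mu=-Du$ is the field-theoretic form of the Aubry--Mather fact that the conjugate multiplier is the cohomology class (the flux multiplier furnishing the exact part of $Du$ and the homology multiplier its cohomology class); aligning the signs, the integrations by parts, and the second-order constraint identities around it is the delicate heart of the argument. The same cancellation appears more transparently if one instead computes the Hessian of the constrained Lagrangian on constraint-tangent directions, where the acceleration group is automatically absorbed and only the remainder-versus-Hamilton--Jacobi cancellation and the current term remain.
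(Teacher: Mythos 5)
Your route is genuinely different from the paper's, and it has a gap at precisely the point you yourself flag as ``the delicate heart of the argument.'' The paper never introduces multiplier fields for the flux and homology constraints. It passes to the variables $(a,u)$, where the action becomes $\int -\frac{\hbar^2}{2}|Da|^{2}+\frac{a^{2}}{2}|Du|^{2}-Wa^{2}\,dx$, computes $j''(\tau)$ directly, and disposes of every term containing $Du\cdot Du'$ or $Du\cdot Du''$ by differentiating the constraint $\mathrm{div}(a^{2}Du)=0$ once and twice in $\tau$, multiplying the resulting identities by $u$ and integrating by parts. The only multiplier that ever appears is the scalar $E$ attached to $\int a^{2}=1$. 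This yields $j''(0)=\int -\hbar^{2}|Da'|^{2}+a^{2}|Du'|^{2}-2(a')^{2}\big(\frac{|Du|^{2}}{2}+W-E\big)\,dx$; criticality replaces $\frac{|Du|^{2}}{2}+W-E$ by $\frac{\hbar^{2}}{2}\frac{\triangle a}{a}$, one further integration by parts completes the square to $a^{2}\big|D\big(\frac{a'}{a}\big)\big|^{2}$, and the pointwise algebraic identity $-\hbar^{2}\big|D\big(\frac{a'}{a}\big)\big|^{2}+|Du'|^{2}=D[v']\cdot D[(v^{*})']$ finishes the proof. No identification of a multiplier field is needed anywhere.

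In your scheme everything hinges on the residual current term $\int a^{2}((v^{*})'-v')\,(Du+\mu)\cdot D[v'+(v^{*})']\,dx$ vanishing because $\mu=-Du$, and this is asserted, not proved. It is also not free: the stationarity conditions only give $\mathrm{div}(a^{2}(Du+\mu))=0$ together with $\int a^{2}(Du+\mu)\,dx=0$ (and the latter only if you remember to vary $P$ as well as the periodic part $z$ of $u$); to conclude $\mu=-Du$ you must prove uniqueness of a closed $1$-form $\mu$ with $a^{2}\mu$ divergence-free and prescribed average, which amounts to a positive-definiteness (energy) argument that is entirely absent from your write-up. The intermediate cancellations (``integrating by parts then converts the remainder into\dots'') are likewise described rather than carried out, so the claimed formula for $j''(0)$ is not actually derived. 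The fix suggested by the paper is to bypass the multiplier identification altogether: pair the $\tau$-differentiated constraint identities with $u$ itself and integrate by parts — that kills exactly the current terms you are left struggling with — and then conclude with the pointwise identity \eqref{trri} rather than with a global cancellation.
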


Here we will  analyze a principle of {\bf critical action} for such an action $I$. The sign of \eqref{rwr1} will determine if the critical solution is a local minimum or a local  maximum (or indeterminate).

We say that a certain $\psi$ is critical for $I$ if any close-by variation will have zero derivative. Another issue will be whether at a given critical $\psi$,  the functional $I$ attains a local minimum, a local maximum, or neither. Our objective will be to analyze the first and second variational problems.

We point out that  in Sections \ref{FiVar} and \ref{SeVar} here (in the same way as in Section 2 in \cite{Ev}) the functions $v,v^*:T^n \to \mathbb{R}$, do not satisfy any other constraint different from $\int   e^{\,\frac{v^*(x)\,-\,v(x)}{\, \hbar} }d x=1.$ When comparing our setting  with the hypothesis 
(16), (17), and (18) in \cite{Voll}, it is clear that in \cite{Voll} there are assumptions on $b,b^*$, which does not correspond to the freedom of our $v,v^*.$ In the same direction, expressions  (4) (5) (6), and (7)  in  \cite{Carlen}  show that we are considering a different class of problems.

 In Section \ref{outra} and \ref{duaal} the functions 
$v,v^*$ will satisfy certain relations which will be described by  \eqref{elr1} and \eqref{elr2}. When considering \eqref{rwr1} (see also \eqref{elr1} and \eqref{elr2}) we get from Remarks \ref{yyt3} and \ref{yyt4} that a critical point is not  necessarily a local minimum (or maximum).

In our understanding, the  above expression of the functional  \eqref{pri}, taken from Section 7.2 in \cite{Ev}, corresponds to a different type of problem  that one considered in \cite{GM}. We also point out that \cite{GM}
considered   variational problems depending on time which is a different framework when compared with \cite{Ev}.


\begin{remark} \label{kpo}

To add the term $P x$ to $u$, getting $\tilde{u}= u + P\, x$,  is equivalent to consider the new functions  $\tilde{v} = v + P\, x$ and  $\tilde{v}^* = v^* + P\, x$.
\end{remark}



Below $D$ will denote derivative with respect to $x$.
\medskip

We will need a later expression \eqref{trri}.

As $ \frac{a'}{a}= \,\frac{\,(v^*)' -v'\, }{\, 2\,\hbar} $, then 

\begin{equation} \label{trri}
-\,\hbar^2\, \Big|D\Big(\frac{a^{\prime}}{a}\Big)\Big|^{2}\,\,+\, \, \,|Du^{\prime}|^ {2} \,=\, \,D[ v' ]\, D [(v^*)']\,\,
\end{equation}

In the case we want to consider $\tilde{u}=u + P x$ instead of $u$, we get

\begin{equation} \label{trri1}
-\,\hbar^2\, \Big|D\Big(\frac{a^{\prime}}{a}\Big)\Big|^{2}\,\,+\, \, \,|D\tilde{u}^{\prime}|^ {2} \,=\, \,D[\tilde{ v}' ]\, D [(\tilde{v}^*)']\,\,=\, \,D[ v' ]\, D [(v^*)'].
\end{equation}


\smallskip

\noindent

   Here $\tilde{\psi}$ will be considered in the form $\tilde{\psi}=a\, e^{i \frac{u}{\hbar}},$ where $a=e^{\,\frac{\,v^* -v\, }{2\,\hbar }}$ and $ u\,= \frac{ v^* +v}{2}$, where $v $ and $v^*$ are periodic functions taking real and differentiable values.

Let's assume that $\psi$ has the "Bloch waveform"
$$
\psi=e^{i\,\,  \frac{1}{\hbar} P\ x}\, \tilde{\psi}
$$
for some $P\in \mathbb{R}^{n}$ and where $\tilde{\psi}$ : $\mathrm{T}^{n}\rightarrow \mathbb{C}$ was given in the above form. The introduction of the parameter $P$ is quite important in our reasoning.

Consider a fixed vector $V\in \mathbb{R}^{n}$.

We consider here  the Quantum Guerra-Morato action of the state $\psi $
\begin{equation} \label{fra1}
A[\psi]\text{ }:=\int_{\mathrm{T}^{n}}(\,\,\frac{D \,v^*(x) \, \,D\, v ( x)}{2}-\,W(x))\, a^{2}(x)dx,
\end{equation}
where we assume that
\begin{equation} \label{fra2}
\int_{\mathrm{T}^{n}}|\psi(x)|^{2}dx=\int_{\mathrm{T}^{n}}a(x)^{2}dx=1 ,\text{ }
\end{equation}
\begin{equation} \label{fra3}
\int_{\mathrm{T}^{n}}(\overline{\psi(x)}D\psi(x)-\psi(x) D\overline{\psi(x)})\cdot D\ \phi(x) dx=0\text{ for all}\text{ }\phi\in C^{1}(\mathrm{T}^{n}),\text{ }
\end{equation}
and
\begin{equation} \label{fra4}
\frac{\hbar}{2 i\,}\int_{\mathrm{T}^{n}}(\overline{\psi}D\psi-\psi D\overline{\psi})\, dx=V.\text{ }
\end{equation}
\bigskip
\bigskip


\noindent
\quad If $\psi$ is differentiable then \eqref{fra3} implies

\noindent

\begin{equation} \label{fra5} \mathrm{d}\mathrm{i}\mathrm{v}(\overline{\psi}D\psi-\psi D \overline{\psi})=0
\end{equation}

The vector field
$j :=\overline{\psi}D\psi-\psi D\overline{\psi}$ represents what is called the flux in quantum mechanics.

\noindent

Note that
\begin{equation} \label{hi} \frac{\hbar}{ 2 \, i} \,(\, \overline{\psi}D\psi-\psi D\overline{\psi}\,)= \frac{\hbar}{ 2 \, i} \,\Big(a a' + a^2 \frac{i \, u'}{\hbar}
- a \, a' + a^2 \frac{i \, u'}{\hbar}\Big)= a^2 \,u'. \end{equation}

\noindent

So the condition \eqref{fra5} can be replaced by the transport equation
$$ \text{div} (a^2 \, Du)=0$$
and condition \eqref{fra4} by
\begin{equation} \label{hip1}  \int a^2 \,D u \,dx =V.
\end{equation}

Fixing $v,v^*$ we will denote
$$V_0 :=   \int a^2(x) \,D u(x) \,dx = \int \big( e^{\,\frac{\,v^*(x) -v(x)\, }{2\,\hbar }}\big)^2\, D \Big( \frac{ v^*(x) +v(x)}{2} \Big) d x.$$

If we consider $\tilde{u}=\frac{ v^*(x) +v(x)}{2}+ P x$ instead of $u=\frac{ v^*(x) +v(x)}{2}$, as in Remark \ref{kpo}, we get
\begin{equation} \label{hip2}  \int a^2 \,D \tilde{u} \,dx = \int a^2 \,(D u + P) \,dx = V_0 + P=V= V_P.
\end{equation}

\begin{remark} \label{iutw}
This shows that the value $V$, which we fix  as a constraint, is coupled with the term $P$. Note that in the reasoning of Sections \ref{FiVar} and \ref{SeVar} the constraint  \eqref{hip1} is not used.
\end{remark}

\section{First variation} \label{FiVar}

\noindent

\noindent
\quad Consider the state $\psi$ in polar form where
\begin{equation} \label{par1}
\psi=ae^{i\, \frac{u}{\hbar}},
\end{equation}
and where the phase $u$ satisfies the expression
\begin{equation} \label{par2}
u(x)=\,P\cdot x+z(x)
\end{equation}
for some periodic function $z$.

It is known from (7.11) Section 7.2 in \cite{Ev} that the Guerra-Morato action is given in this case by
\begin{equation} \label{par3}
A[\psi]=\int_{\mathrm{T}^{n}}-\,\frac{\hbar^2 }{2}\, |Da|^{2}+\frac{\, a^{2}}{2\, }|Du|^{2}-\, Wa^{2}dx,
\end{equation}
\medskip

\begin{remark} \label{klj}
This action is different from (2.3) considered in \cite{Ev} due to a change of sign in the first term under the integral (see section 7.2 in \cite{Ev}). Indeed, compare \eqref{par3} with (2.3) in \cite{Ev}.
\end{remark}

We are initially interested in the critical points of this action.

\noindent

\noindent

\noindent

Note that  (\ref{fra2}), (\ref{fra3}) and (\ref{fra4}) become by (\ref{hi}) in the expressions
\begin{equation} \label{par4}
\int_{\mathrm{T}^{n}}a^{2}dx=1,\text{ }
\end{equation}
\begin{equation} \label{par5}
\mathrm{d}\mathrm{i}\mathrm{v}(a^{2}Du)=0,\text{ }
\end{equation}
\begin{equation} \label{par6}
\,\int_{\mathrm{T}^{n}}a^{2}\,\,Du\,\,dx=V.
\end{equation}

\medskip

Let $\{(u(\tau),\ a(\tau))\}_{-1\leq\tau\leq 1}$ be a differentiable family of functions indexed by $\tau$ such that it satisfies the constraints above, i.e., (\ref{par4})-(\ref{par6}) holds for $a=a(\tau)$, and moreover $(u(\mathrm{0}),\ a(\mathrm{0}))=(u,\ a)$. Let's assume that for every $\tau\in(-1,1)$, we can write
$$
u(\tau)=P(\tau)\cdot x+z(\tau),
$$
where $P(\tau)\in \mathbb{R}^{n}$ and $v(\tau)$ are $\mathrm{T}^{n}$-periodic. Define
$$
j(\tau)\text{ }:=\int_{\mathrm{T}^{n}}-\frac{\,h^{2}}{2m}|Da(\tau)|^{2} +\frac{\, a^{2}(\tau)}{2}|Du(\tau)|^{2}-\, \, W \,a^{2}(\tau)dx,\text{ }
$$

\medskip
 In the next theorems, we denote $'=\frac{d}{d\tau}$, and  in turn, the derivative with respect to $x$ will be denoted as $D$.
\medskip.

We can thus consider variations and, under the hypothesis that the choice of $a,u$ is critical for the action,
we will be able to obtain constraints for $a$ and $u$.

Consider $a$ and $u$ fixed and  we will look for conditions obtained  when they are critical for the action.

Note that as  the functions $a'=a'(0)$ and $u'=u'(0)$ denote the derivative with respect to the variation $\{(u(\tau),\ a(\tau)) \}_{-1\leq\tau\leq 1}$, with the variable $\tau$, they can be quite general. But, for example,
$a^{\prime}$ must satisfy the identity
$$\int_{\mathrm{T}^{n}} 2\, \frac{ d\,a (\tau)}{d \, \tau}\,a\,\,dx=\,2\,
\int_{\mathrm{T}^{n}}a^{\prime}\,a\,\,dx=0,
$$
which is obtained by differentiating with respect to $\tau$, at $\tau=0$, the expression
$$
\int_{\mathrm{T}^{n}}a^{2}\,\,dx=1.
$$

\noindent

\begin{theorem} Suppose that $\psi $ is described by $\psi = a\, e^{i\,\frac{u}{\,\hbar} } $.
Then $j^{\prime}(\mathrm{0})=0$ for all variations, if and only if,
\begin{equation} \label{this}
\frac{\, h^{2}}{2}\triangle a=a\left(\,\frac{|Du|^{2}}{2}\, +\,W-E\right)\text{ }
\end{equation}
for some real number $E$.

In this case, $a$ and $u$ satisfy the so-called Hamilton-Jacobi equation with a quantum potential.

\end{theorem}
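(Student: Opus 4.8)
The plan is to treat this as a constrained calculus-of-variations problem and read off the Euler--Lagrange equation for the amplitude from the first variation of \eqref{par3}. First I would differentiate $j(\tau)$ under the integral sign at $\tau=0$, using $'=\frac{d}{d\tau}$ together with the product and chain rules, to obtain
$$j'(0)=\int_{\mathrm{T}^{n}}\Big(-\hbar^{2}\,Da\cdot Da'+a\,a'\,|Du|^{2}+a^{2}\,Du\cdot Du'-2W\,a\,a'\Big)\,dx.$$
Since all functions are periodic on $\mathrm{T}^{n}$ there are no boundary terms, so I would integrate by parts to move every derivative off the variations $a'$ and $u'$: the first term becomes $+\hbar^{2}\int(\Delta a)\,a'\,dx$ and the third becomes $-\int \mathrm{div}(a^{2}Du)\,u'\,dx$. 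The expression then splits cleanly into a group multiplying the amplitude variation $a'$ and a group multiplying the phase variation $u'$.

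Next I would eliminate the phase group using the flux-free constraint \eqref{par5}. The phase group is exactly $-\int \mathrm{div}(a^{2}Du)\,u'\,dx$, which vanishes because $\mathrm{div}(a^{2}Du)=0$ holds along the admissible family. The only point needing care is that $u'$ need not be periodic: by \eqref{par2} one has $u=P\cdot x+z$, so $u'=P'\cdot x+z'$ carries the non-periodic piece $P'\cdot x$. I would handle this either by keeping the homology class $P$ fixed in the competing family (so $P'=0$ and $u'=z'$ is periodic), which is consistent with Remark \ref{iutw} that \eqref{hip1} is not invoked here, or by absorbing the linear part as in Remark \ref{kpo}; in either case the flux-free condition kills the whole phase group. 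What remains is $j'(0)=\int_{\mathrm{T}^{n}}G\,a'\,dx$, where $G$ is the Euler--Lagrange expression in $a$ assembled from the $a'$-terms above.

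I would then close the argument with the normalization constraint and the fundamental lemma. Differentiating \eqref{par4} at $\tau=0$ gives $\int a\,a'\,dx=0$, so the admissible amplitude variations are precisely the $L^{2}(\mathrm{T}^{n})$ functions orthogonal to $a$. Hence $j'(0)=0$ for all variations if and only if $G$ is $L^{2}$-orthogonal to the hyperplane $\{a':\int a\,a'\,dx=0\}$, i.e.\ if and only if $G$ lies on the line spanned by $a$, say $G$ is a scalar multiple of $a$; this is the Lagrange-multiplier step for the single constraint \eqref{par4}. Matching the resulting scalar constant with $E$ turns this proportionality into exactly \eqref{this}. This simultaneously gives both implications: the forward direction is the orthogonality statement, while the converse is immediate, since $G$ proportional to $a$ forces $j'(0)=(\text{const})\int a\,a'\,dx=0$ for every admissible variation.

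The step I expect to be the main obstacle is the sign bookkeeping that fixes the sign of the quantum potential. The decisive contribution is the one carried by the $-\tfrac{\hbar^{2}}{2}|Da|^{2}$ term of \eqref{par3} after integration by parts; this is precisely the term by which \eqref{par3} differs from Evans' action functional (Remark \ref{klj}), and it is what makes the potential enter \eqref{this} with the physically correct sign rather than the reversed one of Evans' (2.8). A secondary delicate point is the rigorous justification that the phase variation contributes nothing: one must check that \eqref{par5} holds along the entire admissible family, not merely at $\tau=0$, and that the non-periodic component $P'\cdot x$ of $u'$ is disposed of cleanly, so that the integration by parts in the phase group is legitimate and its averaged contribution genuinely cancels.
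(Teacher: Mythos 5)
There is a genuine gap, and it is precisely at the step you flagged as ``secondary'': the elimination of the phase group. You discard $\int_{\mathrm{T}^{n}} a^{2}Du\cdot Du'\,dx$ by integrating by parts against $\mathrm{div}(a^{2}Du)=0$ at the base point. But the admissible variations $a'$ and $u'$ are \emph{coupled}: differentiating the constraint \eqref{par5} along the family (not merely evaluating it at $\tau=0$) gives the linearized constraint $\mathrm{div}(2aa'Du+a^{2}Du')=0$, i.e.\ \eqref{nana1}. The paper pairs \eqref{nana1} with $u$ itself and integrates by parts (using \eqref{nana2} to dispose of the non-periodic part $P\cdot x$ of $u$), obtaining
$$\int_{\mathrm{T}^{n}}\bigl(2aa'|Du|^{2}+a^{2}Du\cdot Du'\bigr)\,dx=0,$$
so that the cross term equals $-2\int aa'|Du|^{2}\,dx$, which is \emph{not} zero in general. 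Substituting this turns the $+aa'|Du|^{2}$ contribution in $j'(0)$ into $-aa'|Du|^{2}$, and that sign reversal is exactly what produces the $+\tfrac{|Du|^{2}}{2}$ in \eqref{this}. Your computation, taken at face value, yields $\tfrac{\hbar^{2}}{2}\triangle a=a\bigl(-\tfrac{|Du|^{2}}{2}+W-E\bigr)$, which is not \eqref{this} and is inconsistent with Theorem \ref{ptre}: writing $\psi=ae^{iu/\hbar}$ and taking the real part of $-\tfrac{\hbar^{2}}{2}\Delta\psi+W\psi=E\psi$ gives \eqref{this} with the plus sign (the imaginary part gives \eqref{par5}). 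So the decisive sign bookkeeping is not in the $-\tfrac{\hbar^{2}}{2}|Da|^{2}$ term (which you handle correctly) but in the kinetic term, where your argument fails.

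The escape routes you propose do not repair this. If you restrict to families with $P'=0$, then the cross term does vanish, but for such variations the identity above forces $\int aa'|Du|^{2}\,dx=0$ as an \emph{additional} linear constraint on $a'$; the Lagrange-multiplier step then only yields that the Euler--Lagrange expression lies in the span of $a$ \emph{and} $a|Du|^{2}$, a two-parameter family strictly weaker than \eqref{this}, and your single-multiplier conclusion (with the wrong sign) is not what that argument delivers. Moreover, for a general amplitude variation $a'$ there need not exist a compatible \emph{periodic} $u'$ satisfying both \eqref{nana1} and \eqref{nana2}, so restricting to $P'=0$ also weakens the ``only if'' direction. The fix is to follow the paper: keep the cross term, test the linearized transport constraint \eqref{nana1} against $u$, and use the resulting identity to rewrite $\int a^{2}Du\cdot Du'\,dx$ before applying the orthogonality argument in $a'$.
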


\begin{proof}
First, let's estimate $j'(\tau)$. An easy account shows that
$$
j^{\prime}(\tau)=\int_{\mathrm{T}^{n}} \left(\,-\frac{h^{2}}{m}\, Da\cdot Da ^{\prime}+aa^{\prime}|Du|^{2}+a^{2}Du\cdot Du^{\prime}\,-\,2\, Waa^{\prime}\,\right)dx.
$$

Above, $a=a(\tau), \, u=u(\tau)$. Taking derivatives at $\tau=0$ under the assumption of constraints we obtain:
\begin{equation} \label{nana1}
\mathrm{d}\mathrm{i}\mathrm{v}(2aa^{\prime}Du\ +a^{2}Du^{\prime})=0,
\end{equation}
\begin{equation} \label{nana2}
\int_{\mathrm{T}^{n}}\,(\mathit{2}aa^{\prime}Du\text{ }+a^{2}Du^{\prime})\, dx=0 ,\text{ }
\end{equation}
Remember that $Du=P+D z$. We multiply (\ref{nana1}) by $u$ and integrate in space. Then, by integrating by parts we arrive at 
$$
\int_{\mathrm{T}^{n}}(2aa^{\prime}|Du|^{2}+a^{2}Du^{\prime}\cdot Du)\,d x=0.
$$
So,
\begin{align*}
j^{\prime}(0) &=\int_{\mathrm{T}^{n}}\left(-\, \frac{h^{2}}{m}Da\, Da'-a\,a'|Du|^{2}-2\, Wa\,a'\right) dx
\\
&=2\int_{\mathrm{T}^{n}}a^{\prime}\left(\frac{h^{2}}{2m}\triangle a- \left(\frac{|Du|^{2}}{2}+\, W\right)a\right)\,dx.
\end{align*}
\medskip
\noindent

\noindent

Remember that each variation $a^{\prime}$ must satisfy the identity
$$\int_{\mathrm{T}^{n}} \frac{ d\,a (\tau)}{d \, \tau}\,a\,\,dx=
\int_{\mathrm{T}^{n}}a^{\prime}\,a\,\,dx=0.
$$

Assume that $a$ and $u$ satisfy
$$\,\frac{h^{2}}{2m}\triangle a-\left(\frac{|Du|^{2}}{2 }+\, W\right)a\,= -\,E\, a$$
for some constant $E$.

Thus, $j^{\prime}(0)=0$ for any variation of $a$ and $u$.

The above expression means
\begin{equation} \label{retior1}
\left(\frac{|Du|^{2}}{2}+\, W\right)\,-\,E\,=\, \frac{ h^{2}}{2m}\frac{\triangle a}{a},
\end{equation}
for some constant $E$.

\medskip

\medskip
Let's show that when $j'(0)=0$ we get that  the reciprocal holds. We assume that $a^2>0$. 

Now,
\begin{align*}
j^{\prime}(0) &= 2 \int_{\mathrm{T}^{n}} a^\prime \,\left(\frac{h^{2} }{2m}
\triangle a\,-a \left(\frac{|Du|^{2}}{2}+\, W\right)\,\right) dx \\
& = 2 \int_{\mathrm{T}^{n}} a^\prime \,g \, dx.
\end{align*}

As $a^\prime$ is general, the only restriction being that $\int aa^\prime \, dx = 0$, we can conclude from $j^\prime(0)=0$, that there is a constant $E$ such that $g=-a\,E$ and the result follows (similar reasoning as Theorem 2.1 in \cite{Ev}).

Note that given the constraint \eqref{par6}  on $V$, the vector $P$ has to be set, and this shows a coupling of the value $E$ with $V$ (and also $P$)

\medskip

\end{proof}

\bigskip

The next result was proved in   Theorem 7.2 in \cite{Ev}

\begin{theorem} \label{ptre}
If $\psi = a \, e^{i\,\,\frac{u}{\hbar}} $ is differentiable and critical for the Guerra-Morato action, then $\psi$ is an eigenfunction of the
Hamiltonian  operator
\begin{equation} \label{retio} -\, \frac{\hbar^2}{2} \Delta \psi + W \psi =E\, \psi,  
\end{equation}
for some $E\in \mathbb{R}.$

\end{theorem}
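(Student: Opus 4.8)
The plan is to verify \eqref{retio} by a single direct computation of $\Delta\psi$ followed by substitution of the two facts we already have in hand: the Hamilton--Jacobi equation \eqref{this} (equivalently \eqref{retior1}), which holds because $\psi$ is critical, and the flux-zero constraint \eqref{par5}, $\mathrm{div}(a^2\,Du)=0$, which is imposed on admissible $\psi$. The point is that criticality controls the \emph{real} part of $-\frac{\hbar^2}{2}\Delta\psi+W\psi$, while the constraint is exactly what is needed to annihilate the \emph{imaginary} part.

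First I would differentiate $\psi=a\,e^{iu/\hbar}$ to get $D\psi=\big(Da+\frac{i}{\hbar}\,a\,Du\big)e^{iu/\hbar}$, and then take the divergence. Collecting terms and factoring out the common phase, this yields
\[
\Delta\psi=e^{iu/\hbar}\Big[\,\Delta a-\tfrac{1}{\hbar^{2}}\,a\,|Du|^{2}+\tfrac{i}{\hbar}\big(2\,Da\cdot Du+a\,\Delta u\big)\Big],
\]
which I would organize into a real bracket $\Delta a-\frac{1}{\hbar^{2}}a|Du|^{2}$ and an imaginary bracket $\frac{1}{\hbar}\big(2\,Da\cdot Du+a\,\Delta u\big)$.

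The key observation is that the imaginary bracket is precisely the flux up to the factor $a$: since $2\,Da\cdot Du+a\,\Delta u=\frac{1}{a}\,\mathrm{div}(a^{2}Du)$ (here $a>0$ is used), constraint \eqref{par5} forces it to vanish, so $\Delta\psi=e^{iu/\hbar}\big(\Delta a-\frac{1}{\hbar^{2}}a|Du|^{2}\big)$. I would then compute $-\frac{\hbar^{2}}{2}\Delta\psi+W\psi=a\,e^{iu/\hbar}\big(-\frac{\hbar^{2}}{2}\frac{\Delta a}{a}+\frac{|Du|^{2}}{2}+W\big)$ and invoke \eqref{this}/\eqref{retior1}, which says the parenthesized expression equals the constant $E$. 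Hence $-\frac{\hbar^{2}}{2}\Delta\psi+W\psi=E\,a\,e^{iu/\hbar}=E\,\psi$, as claimed.

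I expect the main obstacle to be a bookkeeping one rather than a conceptual one: the cancellation of the imaginary part rests entirely on the flux-zero constraint \eqref{par5}, so I must make explicit that this step is where the constraint (and not merely criticality) enters, since without it $\psi$ would fail to be a genuine eigenfunction. A secondary point requiring care is reconciling the coefficient conventions in the excerpt (the interplay of $h$, $\hbar=\frac{1}{m}$, and the factor $\frac{1}{2m}$ appearing in \eqref{retior1}) so that the coefficient of $\frac{\Delta a}{a}$ produced by the computation matches exactly the $\frac{\hbar^{2}}{2}$ in the Hamiltonian operator \eqref{retio}; once the normalizations are aligned, the substitution of \eqref{this} is immediate.
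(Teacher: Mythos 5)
Your proposal is correct and complete. The paper itself gives no proof of Theorem \ref{ptre} --- it simply defers to Theorem 7.2 of Evans --- and the argument you supply is exactly the standard (Madelung-type) verification that underlies that citation: expand $\Delta\bigl(a\,e^{iu/\hbar}\bigr)$, observe that the imaginary part is $\tfrac{1}{a}\,\mathrm{div}(a^{2}Du)$ and hence vanishes by the constraint \eqref{par5} (using $a>0$), and identify the real part with the constant $E$ via the criticality equation \eqref{this}/\eqref{retior1}. You are also right to flag the $h$ versus $\hbar=\tfrac1m$ bookkeeping as the only delicate point; the paper is itself inconsistent there (compare the $\tfrac{h^{2}}{2m}$ in \eqref{retior1} with the $\tfrac{\hbar^{2}}{2}$ in \eqref{retio}), and once a single convention is fixed your substitution goes through verbatim.
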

\medskip

Note that expression \eqref{this} is different from expression (2.8) in \cite{Ev} due to a change of sign in the Laplacian term.

  \bigskip

\section{Second variation} \label{SeVar}

\noindent
Let us now analyze the second variation. Thus, we will take the second derivative  of $j$  with respect to $\tau$.

\noindent
\begin{theorem} Suppose that $\psi $ is described by $\psi = a\, e^{i\,\frac{u}{\,\hbar} } $, where
$ u =\frac{v + v^*}{2},$ $a=e^{\frac{\,v^* -v\, }{2\, \hbar} }$ and $v,v ^*$ are real functions.
If $\psi=ae^{i\, u/h}$ is a critical point for action $I$ then
\begin{equation} \label{ert1}
j^{\prime \prime}(0)=\int_{\mathrm{T}^{n}}-\frac{h^{2}}{m}|Da^{\prime}|^{2 }+a^{2}|Du^{\prime}|^{2}-
2(a^{\prime})^{2}\left(\frac{|Du|^{2}}{2}+ W\,-E\,\right)dx.
\end{equation}
\end{theorem}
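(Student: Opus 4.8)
The plan is to differentiate the energy functional $j(\tau)$ twice with respect to $\tau$ and then evaluate at $\tau = 0$, using the critical point condition \eqref{this} to eliminate the terms that would otherwise obstruct the clean expression. Recall from the first-variation computation that
\[
j'(\tau)=\int_{\mathrm{T}^{n}}\Big(-\frac{h^{2}}{m}\,Da\cdot Da'+a\,a'|Du|^{2}+a^{2}Du\cdot Du'-2\,W a\,a'\Big)\,dx,
\]
where everything is evaluated at the parameter $\tau$. First I would differentiate each of these four terms once more in $\tau$ by the product rule, being careful that $a$, $u$, and their $x$-derivatives all depend on $\tau$. This produces a sum of integrals involving the second variations $a''$ and $u''$ together with quadratic expressions in the first variations $a'$ and $u'$.

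The key organizing idea is to separate the resulting terms into two groups: those containing the second-order variations $a''$ and $u''$, and those that are purely quadratic in $a'$ and $u'$. I expect the terms multiplying $a''$ and $u''$ to assemble, after an integration by parts, into exactly the first-variation expression evaluated against $a''$ and $u''$ in place of $a'$ and $u'$; more precisely they should reproduce $2\int a''\big(\tfrac{h^{2}}{2m}\triangle a-a(\tfrac{|Du|^{2}}{2}+W)\big)\,dx$ plus a flux term handled via the transport constraint. This is where the critical-point hypothesis enters decisively: by \eqref{this}, the bracket equals $-E\,a$, so the second-order-variation contribution collapses to $-2E\int a\,a''\,dx$, which in turn is controlled by twice differentiating the normalization $\int a^{2}\,dx=1$ so that $\int a\,a''\,dx=-\int (a')^{2}\,dx$.

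For the purely quadratic part, I would collect the contributions $-\tfrac{h^{2}}{m}|Da'|^{2}$, $a^{2}|Du'|^{2}$, and the several cross and lower-order terms of the form $(a')^{2}|Du|^{2}$, $a\,a'\,Du\cdot Du'$, and $W(a')^{2}$. The cross terms of the type $a\,a'\,Du\cdot Du'$ are precisely the ones I expect to vanish or recombine once the constraints \eqref{nana1} and \eqref{nana2} (the differentiated transport and flux conditions) are invoked, by the same multiply-by-$u$ and integrate-by-parts trick used in the first variation. After substituting $\int a\,a''\,dx=-\int(a')^{2}\,dx$ from the normalization and grouping the surviving quadratic terms, the $E$-dependent pieces and the potential pieces should coalesce into the single combination $-2(a')^{2}\big(\tfrac{|Du|^{2}}{2}+W-E\big)$, leaving $-\tfrac{h^{2}}{m}|Da'|^{2}+a^{2}|Du'|^{2}$ as the remaining top-order terms, which is exactly \eqref{ert1}.

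The main obstacle I anticipate is the careful bookkeeping of the flux and holonomy constraints at second order: one must differentiate \eqref{par5} and \eqref{par6} twice and verify that the terms involving $u''$ and the mixed quantities $a\,a'\,Du'$ cancel against the integration-by-parts boundary-free contributions, so that no uncontrolled $u''$ survives. Getting the signs right on the Laplacian term (recall Remark \ref{klj} warns that this functional differs from the Evans functional by a sign in the $|Da|^{2}$ term) and correctly converting between $Da$ and $\triangle a$ via integration by parts is the delicate step; everything else is the product rule and the substitution of the Euler--Lagrange identity \eqref{this}.
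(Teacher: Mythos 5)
Your proposal follows essentially the same route as the paper's proof: expand $j''(\tau)$ by the product rule, differentiate the transport constraint a second time, pair it against $u$ and integrate by parts to absorb the $u''$ and $a\,a'\,Du\cdot Du'$ cross terms, then invoke the Euler--Lagrange identity \eqref{this} together with $\int a\,a''\,dx=-\int(a')^{2}\,dx$ from the twice-differentiated normalization. The steps you flag as delicate (the sign of the Laplacian term and the second-order differentiation of \eqref{par5}) are exactly the ones the paper carries out in \eqref{ert3}--\eqref{ert5}, so the argument is correct and matches the paper's.
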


\begin{proof} Note that
\begin{align}\label{ert2}
j^{\prime\prime}(\tau) &=\int_{\mathrm{T}^{n}} \,\,\left[-\frac{h^{2}}{m}|Da^ {\prime}|^{2}-\frac{h^{2}}{m} Da\cdot Da^{\prime\prime}\,\right] \nonumber \\ 
& \qquad \quad + [aa^{\prime\prime}|Du|^{2}+4aa^{\prime}Du\cdot Du^{\prime}+a^{2}Du\cdot Du^{\prime\prime}\,] \nonumber \\
& \qquad \qquad  + [\,(a^{\prime})^{2}|Du|^{2}+ a^{2}|Du^{\prime}|^{ 2} -2\, W(a^{\prime})^{2}\, -2\, Waa^{ \prime\prime}\,]\, \,dx.
\end{align}

Taking derivative in (\ref{nana1}) and (\ref{nana2}) we get
\begin{equation} \label{ert3}
\mathrm{d}\mathrm{i}\mathrm{v}(2(a^{\prime})^{2}Du+\mathit{2}a a^{\prime \prime}Du +4aa^{\prime}Du^{\prime}+a^{2}Du^{\prime \prime})=0\end{equation}
and
\begin{equation} \label{ert4}
\int_{\mathrm{T}^{n}}(\,2(a^{\prime})^{2}Du+\mathit{2}aa^{\prime\prime}Du\text{ }+4 aa^{\prime}Du^{\prime}\text{ }+a^{2}Du^{\prime\prime}\,)\,dx=0.\end{equation}
\quad

\quad

Now take $\tau=0$, multiply (\ref{ert3}) by $u$, integrate, and do integration by parts to get
\begin{equation} \label{ert5}
\int_{\mathrm{T}^{n}}\left[\,2(a^{\prime})^{2}\,|Du|^{2}+2aa^{\prime\prime}|Du|^{ 2}+\mathit{4}aa^{\prime}Du^{\prime}\cdot Du+a^{2}Du^{\prime\prime}\cdot Du\,\right]dx=0 .\end{equation}
\smallskip

Let's now use the expression (\ref{ert2}). As $j'(0)=0$, we integrate by parts  $(-h^{2}/2m) \,  Da\cdot Da^{\prime\prime}$ and obtain
\begin{align*}
j^{\prime\prime}(0) &= \int_{\mathrm{T}^{n}} \Big[ 2a^{\prime\prime}\left(\frac{h^{2}}{2m}\triangle a-a\left( \frac{|Du|^{2}}{2}+ \, \, W\right)\right)-\frac{h^{2}}{m}|Da^{\prime}|^{2} \\
& \qquad-(a^{\prime})^{2}|Du|^{2}+a^{2}|Du^{\prime}| ^{2}-2\,\, W(a^{\prime})^{2}\, \,\Big] \, dx
 \\
&=\int_{\mathrm{T}^{n}}2a^{\prime\prime}(-E\,\,a)-\frac{h^{2}}{m}|Da^{\prime}|^{2}-2(a^{\prime})^{2}\left(\frac{|Du|^{2}}{2}+ \, W\right)+a^{2}|Du^{\prime}|^{2}\, \,  dx \\
&=\int_{\mathrm{T}^{n}}-\frac{h^{2}}{m}|Da^{\prime}|^{2}+a^{2}|Du^{\prime}|^{2}-2(a^{\prime})^{2}\left(\frac{|Du|^{2}}{2}+ \, W- \,E\right) dx. \\
& =\int_{\mathrm{T}^{n}}-\frac{h^{2}}{m}|Da^{\prime}|^{2}+a^{2}|Du^{\prime}|^{2}  \,\, dx-
\int_{\mathrm{T}^{n}}\,2(a^{\prime})^{2}\left(\frac{\,|Du|^{2}}{2}+ \, W- \,E\right)dx.
\end{align*}
\medskip
\noindent

Above we use  the identity 
$$
\int_{\mathrm{T}^{n}}a^{\prime\prime}a+(a^{\prime})^{2}\,\, dx=0,
$$
obtained by differentiating with respect to $\tau$ twice and using $\int a^2 \,dx=1$.

The reasoning above proves the claim we were looking for.

\end{proof}

\medskip

Note that \eqref{ert1} is an expression which is different from (2.12) in \cite{Ev}.
\medskip

\section{Other expression for $j''(0)$} \label{outra}

We want to get a more appropriate expression for $j''(0)$.

\medskip

Let's assume that
$$
a(x)>0,\text{ for all }\,x \in \mathrm{T}^{n}.
$$
\noindent
\begin{theorem}  \label{poir} If $\psi=a\,e^{i\,u/h}$ is critical for the action and $a>0$, where $ u =\, \frac{v + v^*}{2},$ $a=e^{\,\frac{v^*\,-\,v}{2\, \hbar} }$,
then
$$ j^{\prime\prime}(0)=\int_{\mathrm{T}^{n}}\left(|Du^{\prime}|^{2} - \frac{h^2}{m} \Big|D\Big(\frac{a^{\prime}}{a}\Big)\Big|^{2}\right) \,a^{2}\, dx = $$
\begin{equation} \label{rwr}
\int a^{2}\,D[ v' ]\, D [(v ^*)'] dx.
\end{equation}
\end{theorem}

\smallskip
\noindent
{\bf Proof:} Since $\psi$ is critical, form \eqref{retior1} we have
$$
\frac{\,h^{2}}{2m}\triangle a=a\left(\frac{\,|Du|^{2}}{2}+W-E\right).
$$

So, from (\ref{trri}) we have
\begin{align*}
j^{\prime\prime}(0) &=\int \, a^{2}|Du^{\prime}|^{2}-\frac{h ^{2}}{m}\,  \,|Da^{\prime}|^{2}-\,\,2(a^{\prime}) ^{2}\left(\frac{\,h^{2}}{2m}\frac{\triangle a}{a}\right)\,dx \\
& =\int \,a^{2}|Du^{\prime}|^{2}-\frac{h^{2}}{m}\, |Da^{\prime}|^{2}-
\,\frac{h^{2}}{m} (a^{\prime})^{2}\, \frac{|Da|^{2}}{a^{2}}+\,\frac{2h^{2}}{m}a^{\prime}\, \frac{Da^{\prime}\cdot Da}{a} \,\, dx \\
&=\int_{\mathrm{T}^{n}}\,a^{2}|Du^{\prime}|^{2}-\frac{h^{2}}{m}\, a^ {2}\Big|D\Big(\frac{a^{\prime}}{a}\Big)\Big|^{2}\,dx=\, \int a^{2}\,D [ v' ]\, D [(v ^*)'] \, dx,
\end{align*}
showing the claim of the  theorem.

$\square$

\medskip

Note that 
$j^{\prime\prime}(0)$ in expression \eqref{rwr} may be positive or negative.

\begin{remark} \label{yyt3}
Remember that $\psi $ is described by $\psi = a\, e^{i\,\frac{u}{\,\hbar} } $, where
$ u =\frac{v + v^*}{2},$ $a=e^{\frac{\,v^*- v}{2\, \hbar} }$ e $v,v ^*$ are real functions. Note that if $v=- v^*$ then the $j^{\prime\prime}(0)$ is negative. Also, if $v+v^*$ is constant, then the above expression is
negative. 

\end{remark}

\begin{remark} \label{yyt}
Note that in the case we consider $u= \frac{v + v^*}{2} +  P \, x$,  we  also get
 \begin{equation}  \label{wer} j^{\prime\prime}(0) = \int a^{2}\,D [ v' ]\, D [(v ^*)'] \, dx.
\end{equation}

We point out that from Remark \ref{iutw} the vector $P$ has to be set from the constraint \eqref{par6}  on $V$.

\end{remark}

\smallskip

\begin{remark} \label{yyt4}

Similar results hold when $x \in \mathbb{R}^n$ and not on the torus. Let's consider $x\in \mathbb{R}$ next.
Note that in the case of the harmonic oscillator
$H(x,p)= \frac{p^2}{2 m} + \frac{m\, w^2 \, x^2 }{2} $ we have the ground state
$\psi_0$, which is the minimum energy state $E_0= \frac{1}{2} \, \hbar \, w$, is described by
$$\psi_0(x)= ( \frac{m \, w}{\pi \, \hbar} )^{\frac{1}{4}} e^{- \frac{m\, w\, x^2}{2\, \hbar}}.$$
Thus, $|\psi_0|^2$ will determine a Gaussian density with variance $a= \sqrt{\frac{h}{2\,m\, w}}.$

Note that $ v^{*} (x)-v(x)=-m\, x^2$ and $v +v^{*} =0$. 

Thus, $v ' =- v^{* } $ $' $ and

$$\, \frac{1}{m}\int a^{2}\,D[ v' ]\, D [(v^*)']\,\,dx<0$$

\noindent
  That is, we just show  the ground state $\psi_0$ is not a local minimum  for the Guerra-Morato action $I$ when $x\in \mathbb{R}$ (and not in the circle). 
  
  \end{remark} 
  
  

\medskip

\noindent

\section{Dual eigenfunctions} \label{duaal}

\noindent

In this section, we will assume conditions on $v,v^*$.

\medskip

\noindent
It is natural to consider the dual eigenvalue problems:
\begin{equation} \label{elr1}
\left\{\begin{array}{ll}
-\frac{\,h^{2}}{2}\triangle w
+\, h P\cdot Dw +\,W\,w=E^{0}_{\hbar}w & \mathrm{i}\mathrm{n}\ \mathrm{T}^{n}\\
w\ \mathrm{i}\mathrm{s}\ \mathrm{T}^{n} \, \text{periodic}
\end{array}\right.
\end{equation}

and

\begin{equation} \label{elr2}
\left\{\begin{array}{ll}
-\frac{\,h^{2}}{2}\triangle w^*- \,h P.\ Dw^* +\,Ww^* =E^{0}_{\hbar} w^* & \text{in}\, \mathrm{T}^{n}\\
w^*\ \mathrm{i}\mathrm{s}\ \mathrm{T}^{n}\, \text{periodic},
\end{array}\right.
\end{equation}

\noindent where $E^{0}_{\hbar}(P)\in \mathbb{R}$ is the main eigenvalue.  We may assume the real eigenfunctions $w,\ w^{*}$ to be positive in $\mathrm{T}^{n}$ and normalized in such away  that
\begin{equation} \label{qual}
\int_{\mathrm{T}^{n}}w w^{*} dx=1.
\end{equation}

Moreover, we can take $w,\ w^{*}$ and $E^{0}_{\hbar}$ to be twice differentiable  in $x$ and $P$.

Note the change of sign in the $W$ term when compared with (3.1) and (3.2) in \cite{Ev} (and also \cite{Ana1} and \cite{Ana3}).

The equations \eqref{elr1} and \eqref{elr2} correspond - in the classical point of view - to the equations satisfied by, respectively, the left eigenfunction  and the right eigenprobability for the action of a continuous time semigroup (not stochastic) as described in \cite{BEL}, \cite{KLMN}, \cite{LMN} (for a related quantum result see \cite{BKL}). Equation \eqref{qual} for  $w\, w^*$ corresponds to the equilibrium probability for  the associated renormalized (Gibbs) stochastic semigroup. In some sense, these are results analogous to the ones obtained in the discrete-time setting through the use of  the Ruelle operator theorem. The time reversible  setting is related to  the case where $w=w^*$ (as in \cite{KLMN}) .

\noindent
\quad In the same way as in \cite{Ev} we employ a form of the Cole-Hopf transformation, to define
$$
\left\{\begin{array}{l}
v:=-h\log w\\
v^{*}:=h\log w^{*}.
\end{array}\right.
$$
Then,
$$
\left\{\begin{array}{l}
w=e^{-v/h}\\
w^{*}=e^{v^{*}/h}
\end{array}\right.
$$
and then follows that

\begin{equation} \label{LuBo1}
\left\{\begin{array}{ll}
- \frac{\,h}{2\,}\triangle v+\frac{1}{2}|P+Dv|^{2}- W=\overline{H}_{\hbar}(P) & \mathrm{i}\mathrm{n}\ \mathrm{T}^{n}\\
v\ \mathrm{i}\mathrm{s}\ \text{ a}\,\mathrm{T}^{n}-\text{periodic function}   
\end{array}\right.
\end{equation}

and 
\begin{equation} \label{LuBo2}
\left\{\begin{array}{ll}
\frac{\,h}{2\,}\triangle v^*+\frac{1}{2}|P+Dv^*|^{2}-W=\overline{H}_{\hbar}(P) & \mathrm{i}\mathrm{n}\   \text{is a}\,  \mathrm{T}^{n}\\
v^*\ \mathrm{i}\mathrm{s}\  \text{ a}\,  \mathrm{T}^{n}-\text{periodic function}
\end{array}\right.
\end{equation}

for
\begin{equation} \label{LuBo7}
\overline{H}_{\hbar}(P)\text{ }:=\frac{\,|P|^{2}}{2}-E^{0}_{\hbar}(P).
\end{equation}

It follows from classical  PDE estimates the bounds
$$
|Dv|,\text{ }|Dv^{*}|\leq C,
$$
for a constant $C$ depending only on $P$ and  $W$.


\medskip

\begin{remark}  The equation (\ref{LuBo1}) in the one-dimensional case admits a solution 
\[ v = -h \, \log (w) +Cx \]

\noindent where $w>0$ solves the equation 

\[ w'' - 2\left(\frac{P+C}{h}\right) w' + \left(\left(\frac{P+C}{h}\right)^2-\frac{2}{h^2}(\bar{H}_h+W)\right) w=0.   \]

For $v^*$ take 
\[ v^* = h \, \log (u) +Cx \]

\noindent where $w^*>0$ solves the equation 

\[ -(w^*)'' - 2\left(\frac{P+C}{h}\right) (w^*)' + \left(\left(\frac{P+C}{h}\right)^2-\frac{2}{h^2}(\bar{H}_h+W)\right) w^*=0.   \]

\bigskip

Now we define
 \begin{equation}  \label{oculos1}
a^2 =\sigma\text{ }:=ww^{*}=e^{\frac{v^*-v}{h}}\,\,\,\,\,
\end{equation}
and
 \begin{equation}  \label{oculos2}
u\text{ }:=P\cdot x+\frac{v+v^{*}}{2}.
\end{equation}

Note that although $w,\ w^{*},\ v,\ v^{*},\ u$ and $\sigma$ depend on $h$, we will for notational simplicity mostly not write these functions with a subscript $h$. The importance of the product \eqref{oculos1} of the eigenfunctions is also noted in \cite{Ana3} (see also \cite{LT}) but  it is used for a different purpose related to Aubry-Mather Theory (see \cite{Fathi}).

Note that
$$a\, e^{\frac{i\, u}{h}}= e^{\frac{\frac{1}{2}(v^{*}-v ) \,+\,i \,\frac{v+v^{*}}{2}+\, i\, P\cdot x }{h} } .$$
\noindent
\quad According to \eqref{qual},
$$
\sigma>0\text{ }in\text{ }\mathrm{T}^{n},\text{ }\int_{\mathrm{T}^{n}}\sigma dx=1.
$$
\end{remark}

\smallskip
\begin{theorem} \label{BC3} For
$$
u=P\cdot x+\frac{v+v^{*}}{2},\,\,\text{we get}
$$

 (i)
\begin{equation} \label{BC1}
\mathrm{d}\mathrm{i}\mathrm{v}\,(\sigma\, Du)\,=0\,\,\text{  in}\,\,\mathrm{T}^{n}.
\end{equation}
(ii) {\it Furthermore},

\noindent
\begin{equation} \label{BC2}  \frac{1}{2}|Du|^{2}-\,W-\overline{H}_{\hbar}(P)=\frac{\,h}{4}\triangle(v- v^*)-\frac{1}{8}|Dv-Dv^{*}|^{2} \,\,  \text{ in}\,\, \mathrm{T}^{n}. 
\end{equation}

\end{theorem}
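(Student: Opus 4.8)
The plan is to derive both identities directly from the two Cole-Hopf transformed Hamilton-Jacobi equations \eqref{LuBo1} and \eqref{LuBo2}: part (i) will come from \emph{subtracting} them, and part (ii) from \emph{adding} them. Throughout I write $\sigma = e^{(v^*-v)/h}$ and $Du = P + \tfrac12(Dv + Dv^*)$, so that $D\sigma = \sigma\,\tfrac{Dv^*-Dv}{h}$ and $\triangle u = \tfrac12(\triangle v + \triangle v^*)$.

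For (i), I would expand the divergence using the logarithmic-derivative structure of $\sigma$:
\begin{equation*}
\operatorname{div}(\sigma\,Du) = D\sigma \cdot Du + \sigma\,\triangle u = \sigma\Big[\tfrac{Dv^*-Dv}{h}\cdot\big(P + \tfrac{Dv+Dv^*}{2}\big) + \tfrac{\triangle v + \triangle v^*}{2}\Big].
\end{equation*}
Using $(Dv^*-Dv)\cdot(Dv^*+Dv) = |Dv^*|^2 - |Dv|^2$, the bracket becomes $\tfrac12(\triangle v + \triangle v^*) + \tfrac{P\cdot(Dv^*-Dv)}{h} + \tfrac{|Dv^*|^2 - |Dv|^2}{2h}$. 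On the other hand, subtracting \eqref{LuBo1} from \eqref{LuBo2} and dividing by $h$ (after expanding $|P+Dv^*|^2 - |P+Dv|^2 = 2P\cdot(Dv^*-Dv) + |Dv^*|^2 - |Dv|^2$) yields exactly this expression set equal to zero. Hence the bracket vanishes, and since $\sigma > 0$ we conclude $\operatorname{div}(\sigma Du) = 0$.

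For (ii), adding \eqref{LuBo1} and \eqref{LuBo2} and dividing by $2$ gives
\begin{equation*}
\tfrac{h}{4}(\triangle v^* - \triangle v) + \tfrac14\big(|P+Dv|^2 + |P+Dv^*|^2\big) - W = \overline{H}_{\hbar}(P).
\end{equation*}
The remaining step is purely algebraic: setting $A = P + Dv$ and $B = P + Dv^*$, so that $Du = \tfrac{A+B}{2}$ and $A - B = Dv - Dv^*$, the parallelogram identity gives $\tfrac14(|A|^2 + |B|^2) = \tfrac12|Du|^2 + \tfrac18|A-B|^2$. Substituting this in, rearranging, and using $\triangle v^* - \triangle v = -\triangle(v-v^*)$ produces precisely \eqref{BC2}.

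There is no deep obstacle here; the work is entirely careful bookkeeping, and everything hinges on the two HJ equations already established in \eqref{LuBo1}--\eqref{LuBo2}. The only points that must be handled cleanly are the two algebraic identities that make each half collapse: for (i), that the expanded divergence reproduces term-by-term the \emph{difference} of the two equations, and for (ii), the parallelogram relation $\tfrac14(|A|^2+|B|^2) = \tfrac12|Du|^2 + \tfrac18|A-B|^2$, which converts the symmetric sum of gradient norms into the gradient of $u$ plus the correction term. Once these are in place, both statements follow immediately.
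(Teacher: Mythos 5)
Your proof is correct, and part (ii) is essentially identical to the paper's: average \eqref{LuBo1} and \eqref{LuBo2} and apply the parallelogram identity to $A=P+Dv$, $B=P+Dv^{*}$ (the paper writes this as $\frac12|a-b|^2+\frac12|a+b|^2=|a|^2+|b|^2$, but it is the same rearrangement). Part (i) follows a genuinely different, though equivalent, computation. The paper never expands $\operatorname{div}(\sigma Du)$ directly; instead it works at the level of the eigenfunctions $w,w^{*}$ and the quantum flux, computing
\begin{equation*}
h\,\operatorname{div}(w^{*}Dw-wDw^{*})=\frac{2}{h}\Big(w^{*}\tfrac{h^{2}}{2}\triangle w-w\tfrac{h^{2}}{2}\triangle w^{*}\Big)=2P\cdot D\sigma
\end{equation*}
by substituting the eigenvalue equations \eqref{elr1}--\eqref{elr2}, and then rewriting $w^{*}Dw-wDw^{*}=-\tfrac{1}{h}\sigma D(v+v^{*})$ to land on $P\cdot D\sigma+\tfrac12\operatorname{div}(\sigma D(v+v^{*}))=0$. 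You instead stay entirely at the level of $v,v^{*}$: you expand $\operatorname{div}(\sigma Du)=D\sigma\cdot Du+\sigma\triangle u$ using $D\sigma=\sigma\,\tfrac{Dv^{*}-Dv}{h}$ and recognize the resulting bracket as $\tfrac{1}{h}$ times the difference of \eqref{LuBo2} and \eqref{LuBo1}; I checked the algebra and it closes. The two routes are equivalent because \eqref{LuBo1}--\eqref{LuBo2} are just the Cole--Hopf images of \eqref{elr1}--\eqref{elr2}. What your version buys is uniformity (both halves of the theorem run off the same pair of transformed Hamilton--Jacobi equations, with subtraction giving (i) and addition giving (ii)); what the paper's version buys is that the cancellation of the $W$ and $E^{0}$ terms is visible directly in the eigenvalue equations, and the identity $\operatorname{div}(w^{*}Dw-wDw^{*})$ makes explicit that \eqref{BC1} is the vanishing divergence of the probability current.
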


\noindent
\quad We call   \eqref{BC1} the {\it continuity} (or {\it transport}) {\it equation}, and regard \eqref{BC2} as an {\it eikonal equation} with an error term on the right-hand side. Note that the form of \eqref{BC2} is not exactly the classical Hamilton-Jacobi equation due to the minus sign multiplying the potential $W$ (in the left-hand side of \eqref{BC2}). This expression is different from (3.12) in \cite{Ev}.

\noindent

\begin{proof}
a) Note that
\begin{align*}
h\,\mathrm{d}\mathrm{i}\mathrm{v}(w^{*}Dw-wDw^{*})&=h\,(w^{*}\triangle w-w\triangle w^{ *}) \\
&=\frac{2}{h}\big(w^{*}\big(\frac{h^{2}}{2}\triangle w\big)-w\big(\frac{h^{2}}{2}\triangle w ^{*}\big)\big)
\\ &=\frac{2}{h}[w^{*}(-E^{0}w+W w+h P\cdot Dw)
\\ & \qquad+w(E^{0}w^{*}-Ww^{*}+hP\cdot Dw^{*})] \\
&=2 (w^{*}P\cdot Dw+w\,P\cdot Dw^{*})=2P\cdot D\sigma.
\end{align*}

But
$$
w^{*}Dw\text{ }-wDw^{*}=w^{*}\Big(-\frac{Dv}{h}w\Big)-w\Big(\frac{Dv^{*}}{h}w ^{*}\Big)=-\frac{1}{h}\sigma(Dv+Dv^{*}),
$$
and therefore
$$
P\cdot D\sigma+\frac{1}{2}\mathrm{d}\mathrm{i}\mathrm{v}(\sigma \mathrm{D}(\mathrm{v}+\mathrm{v}^ {*}))=0.
$$
This shows  (i).


\noindent
b)  Taking into account the expression (see \cite{Ev})
$$
\frac{1}{2}|a-b|^{2}+\frac{1}{2}|a+b|^{2}=|a|^{2}+|b|^{2},
$$
and taking $a=(P+Dv)$, $b=( P+Dv^{*})$,  we get

\begin{align*}
\frac{1}{2}\left|P+\frac{1}{2}D(v+v^{*})\right|^{2}&=\frac{1}{8}|(P+Dv)+( P+Dv^{*})|^{2}
\\ &=\frac{1}{4}|P+Dv|^{2}+\frac{1}{4}|P+Dv^{*}|^{2}-\frac{1}{8}| Dv-Dv^{*}|^{2}.
\end{align*}

Therefore,  we finally get

$$
\frac{1}{2}|Du|^{2}-W-\overline{H}_{\hbar}(P) =\frac{1}{2}\left(\frac{ 1}{2}|P+Dv|^{2}-W -\overline{H}_{\hbar}(P)\right) $$
$$+\frac{1}{2}\left(\frac{1}{2}|P+Dv^{*}|^{2}-W-\overline{H}_{\hbar}(P)\right)-\frac {1}{8}|D(v-v^{*}|^{2} 
=$$
$$\frac{1}{2}\big(\frac{h\,}{2}\triangle v\big)-\frac{1}{2}\big(\frac{h}{2}\triangle v^{*}\big)-\frac{1}{8}|D(v-v^{*})|^{2}=$$
$$\frac{\hbar}{4}\triangle (v- v^{*})-\frac{1}{8}|D(v-v^{*})|^{2}.$$
\end{proof}

\noindent
\begin{remark} One can also show that
$$
-\frac{h}{2}\triangle\sigma-\mathrm{d}\mathrm{i}\mathrm{v}((P+Dv)\sigma)=0,
$$
$$
-\frac{h}{2}\triangle\sigma+\mathrm{d}\mathrm{i}\mathrm{v}((P+Dv^{*})\sigma)=0.
$$

Indeed, note that
$$
\frac{\,h}{2}\triangle\sigma=\,\mathrm{d}\mathrm{i}\mathrm{v}\left(\frac{1}{2}D(v^{*}-v )\sigma\right).
$$
Now, add and subtract the above from (\ref{BC1}).

\end{remark}

\noindent

\medskip

Now we will show integral identities involving $Du$ {\it and} $D^{2}u$. To simplify notation, we will denote
$$
 d\sigma\text{ }:=\sigma dx.
$$

\begin{theorem}

\begin{equation} \label{BCB1}
\int_{\mathrm{T}^{n}}\frac{1}{2}|Du|^{2}-W d\sigma=\overline{H}_{\hbar}(P)+\frac{1\,}{8}\int_{\mathrm{T}^{n}}|Dv-Dv^{*}|^{2}d\sigma.
\end{equation} 

 \end{theorem}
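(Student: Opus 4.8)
The plan is to integrate the pointwise eikonal identity \eqref{BC2} against the probability measure $d\sigma=\sigma\,dx$ over $\mathrm{T}^n$. Recall from Theorem \ref{BC3}(ii) that
\begin{equation*}
\frac{1}{2}|Du|^{2}-W-\overline{H}_{\hbar}(P)=\frac{h}{4}\triangle(v- v^*)-\frac{1}{8}|Dv-Dv^{*}|^{2}.
\end{equation*}
Multiplying both sides by $\sigma$ and integrating, and using the normalization $\int_{\mathrm{T}^n}\sigma\,dx=1$ to dispose of the constant term $\overline{H}_{\hbar}(P)$, the whole statement reduces to computing the contribution of the Laplacian term; every other term is already in the desired form.

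The crux is therefore to evaluate $\int_{\mathrm{T}^n}\triangle(v-v^*)\,d\sigma$. Since $v-v^*$ and $\sigma$ are smooth and $\mathrm{T}^n$-periodic, integration by parts produces no boundary contribution and yields
\begin{equation*}
\int_{\mathrm{T}^n}\triangle(v-v^*)\,\sigma\,dx=-\int_{\mathrm{T}^n}D(v-v^*)\cdot D\sigma\,dx.
\end{equation*}
Here I would invoke the Cole--Hopf form \eqref{oculos1}, namely $\sigma=ww^{*}=e^{(v^{*}-v)/h}$, which gives $D\sigma=\tfrac{1}{h}\,\sigma\,D(v^{*}-v)=-\tfrac{1}{h}\,\sigma\,D(v-v^{*})$. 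Substituting this back shows
\begin{equation*}
\int_{\mathrm{T}^n}\triangle(v-v^*)\,d\sigma=\frac{1}{h}\int_{\mathrm{T}^n}|D(v-v^*)|^{2}\,d\sigma.
\end{equation*}

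Feeding this into the integrated identity, the term $\tfrac{h}{4}\int\triangle(v-v^*)\,d\sigma$ collapses to $\tfrac14\int|Dv-Dv^{*}|^{2}\,d\sigma$, which combines with the $-\tfrac18\int|Dv-Dv^{*}|^{2}\,d\sigma$ coming from the last term of \eqref{BC2} to leave $+\tfrac18\int|Dv-Dv^{*}|^{2}\,d\sigma$; isolating $\int_{\mathrm{T}^n}(\tfrac12|Du|^2-W)\,d\sigma$ then reproduces exactly \eqref{BCB1}. I do not expect a genuine obstacle here: the only point demanding care is the sign bookkeeping in the integration by parts together with tracking the factor $h$, so that the coefficient $\tfrac{h}{4}$ contracts precisely to $\tfrac14$ rather than to an $h$-dependent constant. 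As a consistency check, one could instead use the relation $\tfrac{h}{2}\triangle\sigma=\mathrm{div}\big(\tfrac12 D(v^{*}-v)\sigma\big)$ from the preceding remark in place of the explicit differentiation of $\sigma$, arriving at the same conclusion.
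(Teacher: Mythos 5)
Your proposal is correct and follows essentially the same route as the paper: integrate the eikonal identity \eqref{BC2} against $d\sigma$, integrate the Laplacian term by parts, and use $D\sigma=-\tfrac{1}{h}\sigma D(v-v^{*})$ from $\sigma=e^{(v^{*}-v)/h}$ so that $\tfrac{h}{4}\int\triangle(v-v^{*})\,d\sigma$ becomes $\tfrac{1}{4}\int|D(v-v^{*})|^{2}\,d\sigma$, leaving the net coefficient $\tfrac{1}{8}$. The sign and $h$-bookkeeping you flag as the only delicate points are exactly what the paper's computation carries out, and your version handles them correctly.
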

 
\medskip
\noindent

\noindent
\begin{proof}  Note that from the above
$$
\int_{\mathrm{T}^{n}}\frac{1}{2}|Du|^{2}-W-\overline{H}_{\hbar}(P)  d\sigma=$$
$$\frac{\,h}{4}\int_{\mathrm{T}^{n}}\hspace{-5px} \triangle(v- v^*)\sigma dx -\frac{1}{8}\int_{\mathrm{T}^{n}}|Dv-Dv^{*}|^{2}\sigma dx.
$$
But $\sigma=ww^{*}=e^{\frac{v^{*}-v}{h}}$, and therefore
\begin{align*}
\int_{\mathrm{T}^{n}}\frac{1}{2m}|Du|^{2}-W-\overline{H}_{\hbar}(P)d\sigma &=-\frac{\,h}{4}\int_{\mathrm{T}^{n}}D(v-v^{*})\cdot\frac{D(v^{*}-v)}{h}\, \sigma dx
\\
& \qquad -\frac{1}{8}\int_{\mathrm{T}^{n}}|Dv-Dv^{*}|^{2}\, \sigma dx \\ 
&=\frac{1}{8}\int_{\mathrm{T}^{n}}|Dv-Dv^{*}|^{2}d\sigma.
\end{align*}
\noindent
\end{proof}

\noindent
\bigskip
\noindent

\begin{theorem}

For $\hbar $ fixed,
taking derivative with respect to $P$
\begin{equation} \label{polir0} i)\,\,\,\frac{d}{ dP} H_\hbar(P)= \int Du\, d \sigma.
\end{equation}
Moreover,
\begin{align}\label{polir}
 ii)\,\,\,
 \frac{d^2}{ dP^2}\overline{H}_{h}(P) &=\int_{\mathrm{T}^{n}}D_{xP}^{2}u\otimes D_{xP}^{2}ud\sigma \nonumber \\
& \qquad + \frac{1}{4}\int_{\mathrm{T}^{n}}D_{xP}^{2}(v-v^{*})\otimes D_{xP}^{2}(v-v^{*})d\sigma.
\end{align}

This shows that  $\overline{H}_{h}$  is a convex function of $P$.

\end{theorem}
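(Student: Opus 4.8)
The plan is to differentiate the two Hamilton--Jacobi type equations \eqref{LuBo1} and \eqref{LuBo2} with respect to the parameter $P$, to integrate the resulting identities against the density $\sigma=ww^{*}$, and to exploit the two transport identities for $\sigma$ recorded in the Remark following Theorem~\ref{BC3}, namely $-\tfrac{h}{2}\triangle\sigma-\mathrm{div}((P+Dv)\sigma)=0$ and $-\tfrac{h}{2}\triangle\sigma+\mathrm{div}((P+Dv^{*})\sigma)=0$. Write $q:=P+Dv$ and $q^{*}:=P+Dv^{*}$, let $e_j$ denote the $j$-th coordinate vector, and let a subscript $P_j$ mean $\partial/\partial P_j$. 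Throughout I use that $w,w^{*},v,v^{*},\sigma$ are twice differentiable in $P$ (and the bounds $|Dv|,|Dv^{*}|\le C$), so that $P$-derivatives may be interchanged with $\int_{\mathrm{T}^{n}}$ and with $D$, and periodic integrations by parts carry no boundary terms.

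First I would prove (i). Differentiating \eqref{LuBo1} once in $P_j$ gives $-\tfrac{h}{2}\triangle v_{P_j}+q_j+q\cdot Dv_{P_j}=\partial_{P_j}\overline{H}_{\hbar}$, where $q_j=P_j+\partial_{x_j}v$. Multiplying by $\sigma$ and integrating over $\mathrm{T}^{n}$, the two terms $-\tfrac{h}{2}\int\triangle v_{P_j}\,\sigma\,dx+\int q\cdot Dv_{P_j}\,\sigma\,dx$ recombine, after integrating by parts, into $\int v_{P_j}\big(-\tfrac{h}{2}\triangle\sigma-\mathrm{div}(q\sigma)\big)dx=0$ by the first transport identity. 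Since $\int\sigma\,dx=1$, this leaves $\partial_{P_j}\overline{H}_{\hbar}=\int q_j\,d\sigma=\int(P_j+\partial_{x_j}v)\,d\sigma$. The same computation applied to \eqref{LuBo2} with $q^{*}$ and the second transport identity yields $\partial_{P_j}\overline{H}_{\hbar}=\int(P_j+\partial_{x_j}v^{*})\,d\sigma$. Averaging the two and using $u=P\cdot x+\tfrac{v+v^{*}}{2}$, hence $\partial_{x_j}u=P_j+\tfrac12\partial_{x_j}(v+v^{*})$, gives \eqref{polir0}.

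For (ii) I would differentiate \eqref{LuBo1} twice, in $P_j$ and $P_l$, obtaining $-\tfrac{h}{2}\triangle v_{P_jP_l}+\phi_j\cdot\phi_l+q\cdot Dv_{P_jP_l}=\partial^{2}_{P_jP_l}\overline{H}_{\hbar}$ with $\phi_j:=e_j+Dv_{P_j}$. Integrating against $\sigma$, the first and third terms again recombine into $\int v_{P_jP_l}\big(-\tfrac{h}{2}\triangle\sigma-\mathrm{div}(q\sigma)\big)dx=0$, leaving $\partial^{2}_{P_jP_l}\overline{H}_{\hbar}=\int\phi_j\cdot\phi_l\,d\sigma$. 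The identical computation on \eqref{LuBo2}, with $q^{*}$ and the second transport identity, gives $\partial^{2}_{P_jP_l}\overline{H}_{\hbar}=\int\phi_j^{*}\cdot\phi_l^{*}\,d\sigma$ with $\phi_j^{*}:=e_j+Dv^{*}_{P_j}$. Averaging and applying the polarization identity $\phi_j\cdot\phi_l+\phi_j^{*}\cdot\phi_l^{*}=\tfrac12(\phi_j+\phi_j^{*})\cdot(\phi_l+\phi_l^{*})+\tfrac12(\phi_j-\phi_j^{*})\cdot(\phi_l-\phi_l^{*})$, together with the fact that $\tfrac12(\phi_j+\phi_j^{*})$ is the $j$-th column of $D^{2}_{xP}u$ and $\phi_j-\phi_j^{*}$ is the $j$-th column of $D^{2}_{xP}(v-v^{*})$, reproduces exactly the two Gram-type integrals of \eqref{polir}. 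Since each is an integral of an outer product against the positive measure $d\sigma$, the Hessian of $\overline{H}_{\hbar}$ in $P$ is positive semidefinite, giving convexity.

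The main obstacle is the cancellation of the terms carrying the highest $P$-derivatives of $v$ and $v^{*}$, i.e. the $\triangle v_{P_jP_l}$ and $q\cdot Dv_{P_jP_l}$ pieces: this succeeds only because each differentiated Hamilton--Jacobi equation is paired with the correct adjoint transport identity for $\sigma$, and the sign of the Laplacian in \eqref{LuBo1} versus \eqref{LuBo2} must match the sign in the corresponding transport identity. The remaining care is in justifying the interchange of $\partial_{P}$ with $\int_{\mathrm{T}^{n}}$ and with $D$, together with the periodic integrations by parts, all of which rest on the asserted twice-differentiability of $w,w^{*},\sigma$ in $(x,P)$ and on the uniform gradient bounds for $v,v^{*}$.
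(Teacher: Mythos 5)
Your argument is correct and is essentially the paper's own proof, which simply outsources the details to Theorem 4.1 of \cite{Ev}: the ``cancellation procedure'' invoked there is exactly your pairing of the $P$-differentiated equations \eqref{LuBo1}--\eqref{LuBo2} with the two adjoint transport identities for $\sigma$, followed by averaging and polarization. The only cosmetic difference is in part (i), where the paper differentiates the eigenvalue problems \eqref{elr1}--\eqref{elr2} to get $\frac{d}{dP}E^{0}_{\hbar}=-\int Dv\,d\sigma=-\int Dv^{*}\,d\sigma$ and then uses \eqref{LuBo7}, whereas you differentiate the Cole--Hopf transformed equations directly; the two computations are equivalent.
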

\begin{proof} The proof i) is similar to the proof of item 1) in Theorem 4.1 in \cite{Ev}; we just have to substitute $-W$ (of \cite{Ev}) by  $W$.
Taking derivative with respect to $P$ in \eqref{elr1} and \eqref{elr2} we get $\frac{d}{ dP} E^{0}_{\hbar}(P)=- \int D v \,d \sigma$ and 
$\frac{d}{ dP} E^{0}_{\hbar}(P)=- \int D v^* \, d \sigma$ (see page 321 in \cite{Ev}).

Then, from \eqref{LuBo7} we get
$$\frac{d}{ dP} H_\hbar(P) = P -\frac{d}{ dP}  E^{0}_{\hbar}(P)= P+ \frac{1}{2} \int  [Dv + Dv^*] d \sigma= \int  Du\, d \sigma. $$ 

\medskip

The proof  of ii) is also similar to the one in Theorem 4.1 in \cite{Ev}. 

The same cancellation procedure of item 2) in theorem 4.1 in \cite{Ev} results in \eqref{polir}.
\end{proof}

\medskip

The authors thank Diogo Gomes for helpful conversations on the topic.

\medskip

Data sharing is not applicable to this article as no new data were created or analyzed in this study.

\medskip


\begin{thebibliography}{}



\bibitem{Ana1}
N. Anantharaman, On the zero-temperature or vanishing viscosity limit for certain Markov processes arising from Lagrangian dynamics. \emph{J. Eur. Math. Soc.} 6 , no. 2, 207--76  (2004)

\bibitem{Ana2}
N. Anantharaman and F.  Macia, Semiclassical measures for the Schrodinger equation on the torus. \emph{J. Eur. Math. Soc.} 16, no. 6, 1253--1288  (2014)

\bibitem{Ana3} N. Anantharaman,
Entropie et localisation des fonctions propres,  Habilitation \`a diriger les recherches, ENS-Lyon, (2006).


\bibitem{Arnold} V. Arnold,
Mathematical Methods of Classical Mechanics, Springer Verlag  (1997)

\bibitem{BEL}  A. Baraviera, R. Exel and A. O. Lopes,
A Ruelle Operator for continuous time Markov Chains,
 \emph{Sao Paulo Journal of Mathematical Sciences}, vol 4 n. 1, pp 1-16 (2010) 


\bibitem{Ber} 
O. Bernardi, A. Parmeggiani and L. Zanelli, 
Mather measures associated with a class of Bloch wave functions
\emph{Ann. Henri Poincare} 13 , no. 8, 1807-1839 (2012)

\bibitem{BKL} J. E. Brasil,  J. Knorst and A. O. Lopes,
Thermodynamic formalism for continuous-time quantum Markov  semigroups: the detailed balance condition, entropy, pressure and equilibrium  quantum processes, \emph{Open Systems and Information Dynamics}, Volume 30, Issue 04  2350018 (13 pages) (2023)


\bibitem{Carlen} E. Carlen,  Conservative diffusions. \emph{Commun. Math. Phys.} 94, 293-315 (1984)

\bibitem{CI}  G. Contreras and 
R. Iturriaga,
Global Minimizers of
Autonomous Lagrangians, notes on line CIMAT (2000)

\bibitem{Ev}
L. C.  Evans, Towards a quantum analog of weak KAM theory. \emph{Comm. Math. Phys.} 244 no. 2, 311-314  (2004)



\bibitem{Fathi}
A. Fathi,  Weak KAM theory in Lagrangian Dynamics, Preliminary Version. Lecture notes on line, Lyon (2008).

\bibitem{G1} D. Gomes,  A stochastic analogue of Aubry-Mather theory. \emph{Nonlinearity} 15, 581-603 (2002)

\bibitem{GLM}
D. A. Gomes, A. O. Lopes and J. Mohr,
The Mather measure and a large deviation principle for the entropy penalized method, \emph{Communications in Contemporary Mathematics} 13, no. 2, 235--268 (2011)

\bibitem{GV} D. Gomes and C. Valls, 
Wigner measures and quantum Aubry-Mather theory,
\emph{Asymptot. Anal.} 51, no. 1, 47-61 (2007)

\bibitem{GM} F. Guerra and L. Morato, Quantization of dynamical systems and stochastic control theory. \emph{Phys., Rev.} D27,1774-1786 (1983)



\bibitem{Hanna}
K. Hannabuss, An introduction to Quantum Theory, Oxford Press (1997)

\bibitem{KLMN} J. Knorst, A. O.  Lopes, G. Muller and A. Neumann,
Thermodynamic Formalism on the Skhorohd space:  the continuous time Ruelle operator,  entropy, pressure, entropy production and  expansiveness,
 to appear in \emph{Sao Paulo Journal of Math. Sciences}



\bibitem{LT}
A. O. Lopes and Ph. Thieullen,
Transport and large deviations for Schrodinger operators and Mather measures,
"Modeling, Dynamics, Optimization and Bioeconomics III" Editors: Alberto Pinto and David Zilberman, Proceedings in Mathematics and Statistics, Springer Verlag, 247-255 (2018)

\bibitem{LMN} A. O. Lopes, G. Muller, and A. Neumann,
Diffusion Processes: entropy, Gibbs states  and  the continuous time Ruelle  operator,
arXiv (2023)


\bibitem{Nelson} E. Nelson,  Quantum Fluctuations.  Princeton University Press (1985)

\bibitem{GISY}
D. Gomes, R. Iturriaga, H.  Sanchez-Morgado and Y.  Yu,  Mather measures selected by an approximation scheme. \emph{Proc. Amer. Math. Soc.} 138 , no. 10, 3591--3601 (2010)


\bibitem{Yu} Y. Yu,
A remark on the semi-classical measure from $- \frac{h^2}{2}+V$ with a degenerate potential $ V$. \emph{Proc. Amer. Math. Soc. }135 , no. 5, 1449-1454 (2007)

\bibitem{Zanelli}
L. Zanelli,  Schrodinger spectra and the effective Hamiltonian of weak KAM theory on the flat torus. 
\emph{J. Math. Phys.} 57, no. 8, 081507, 12 pp (2016)


\bibitem{Zanelli1}
L. Zanelli,
Mather measures in semiclassical analysis,
\emph{AIMS Ser. Appl. Math., 8
American Institute of Mathematical Sciences (AIMS)}, Springfield, MO,  1059-1066 (2014)

\bibitem{Voll} S.  Vollinger,
Geometry of the Schrodinger equation and stochastic
mass transportation, \emph{J. Math. Phys.} 46, 082105 (2005)



\end{thebibliography}
\end{document}